\begin{document}

\title{Large Datasets, Bias and Model Oriented Optimal Design of Experiments 
%\thanks{}
}
% Grants or other notes about the article that should go on the front
% page should be placed within the \thanks{} command in the title
% (and the %-sign in front of \thanks{} should be deleted) 
%
% General acknowledgments should be placed at the end of the article.

% \subtitle{Do you have a subtitle?\\ If so, write it here}

%\titlerunning{Short form of title}        % if too long for running head

\author{Elena Pesce         \and
        Eva Riccomagno %etc.
}

%\authorrunning{Short form of author list} % if too long for running head

\institute{E. Pesce \at
              Department of Mathematics, Genoa (Italy) \\
              \email{pesce@dima.unige.it}           %  \\
%             \emph{Present address:} of F. Author  %  if needed
           \and
           E. Riccomagno \at
              Department of Mathematics, Genoa (Italy)\\
              \email{riccomagno@dima.unige.it}
}

\date{Received: date / Accepted: date}
% The correct dates will be entered by the editor

\maketitle

\begin{abstract}
We review recent literature that proposes to adapt ideas from classical
model based optimal design of experiments to problems of data
selection of large datasets. Special attention is given to bias
reduction and to protection against confounders. Some new results are presented. 
Theoretical and computational comparisons are made.

\keywords{
Large datasets \and Model bias \and Confounders \and Optimal experimental design
}
% \PACS{PACS code1 \and PACS code2 \and more}
% \subclass{MSC code1 \and MSC code2 \and more}
\end{abstract}

\section{Introduction}
\label{intro}

For the analysis of big datasets statistical methods have been developed which use the full available dataset.  
For example new methodologies developed in the context of big data and focussed on a ``divide-and-recombine" approach 
are summarised in~\cite{wang2015statistical}. 
Other two major methods address the scalability of big data through Bayesian inference based on a Consensus Monte Carlo algorithm~\cite{scott2016bayes} %(Scott et al., 2016) 
and sparsity assumptions~\cite{tibshirani2015statistical}.% (Tibshirani et al., 2015). 

In contrast other authors argue on the advantages of inference statements based on a well-chosen subset of the big dataset. 
Below we review some algorithms and papers for the model based selection of subsamples from a large dataset. 
While usually data can be collected in scientific studies via active or passive observation, big data is often collected in passive way. 
Rarely their collection is the result of a designed process. 
This generates sources of bias which either we do not know at all or are too costly to control.
Nevertheless they will affect the overall distribution of the observed variables~\cite{dunson2018,pescecladag}.

Many authors in~\cite{specialIssueStatProbLetters} argues that analysis of big data set is effected by issues of bias and confounding, selection bias and other sampling problems  (e.g.~\cite{sharpes2018} for electronic health records).  
Often the causal effect of interest can only be measured on the average and great care has to be 
taken about the background population. 
The analysis of the full dataset might be prohibitive because of computational and time
constraints. 
Indeed in some cases the analysis of the full dataset might also be not advisable~\cite{Harford2014}. 
To recall just one example,~\cite{Meng2018} reports that the simple sample proportion of a self-reported big dataset of size $2,300,000$ unit has the same mean squared error as the  sample proportion from a suitable simple random sample of size $400$ and a Law of Large Population has been defined in order to qualify this.  
% It also proves a decomposition 
%of 
%the difference between the sample average  and the population average
% as the product of three terms and relates them to different sources of bias. 

%When we talk about ``Big Data'' we refer to data on a massive scale whose size exceed the capacity of a single computer~\cite{wang2015statistical}. This definition is widely accepted, but in this context it may seem too general because it is not possible to distinguish the cases in which 
%we have many observations $n$ but few predictors $p$ and the opposite. Hence we prefer to refer to tall dataset when $n >> p$ and to wide dataset when $p >> n$, while a large dataset indicates a general dataset with $n \approx p$.

Recently some researchers argued on the usefulness of utilising methods and ideas from  Design of Experiment (DoE) for the analysis of big datasets,  more specifically from model-based optimal experimental design. 
They argue that special models are useful, or even needed, to guard against hidden sources of bias and  that a well-chosen subset of the big dataset can deliver equivalent answers compared to the full dataset at considerably less effort.  
For example one can resort to using randomization or latent variable methods. 
In Section~\ref{section_moda} we review some of those papers 
%that propose to adapt ideas from classical model based optimal DoE to problems of data selection from big datasets
 (see also~\cite{flassig2018model}) distinguishing models without bias, models with bias and no confounders, and models with confounders and no bias.
We make some steps towards the generalisation to include both bias and confounders in Section~\ref{general_formulation}. 
  % Flassig and Schenkendorf, 2018). 
%First we consider classical linear models, secondly models with a bias term and then models with a terms modelling confounders. 
%We  refer on those papers which do not specifically consider the bias term in Subsection~\ref{section_modelwithoutbias}
% and in Subsection~\ref{section_modelwithtbias} we focus on the use of methods from optimal DoE for bias reduction
% and in Subsection~\ref{section_onlyconfounders} on those with only confounders.  
% Some theoretical and computational comparisons are made among these approaches using the software R.
Theoretical and computational comparisons made using the software R lead us to conclude that so far these approaches are more suitable for tall dataset than for genuine large datasets and indicate that much work is needed to have efficient algorithms for subsample selection from large datasets in the presence of bias and confounders.
To fix terminology we recall that a dataset is tall if the number of observations is much larger than the number of predictors, and large when it has many observations and predictors. 
%Finally Section~\ref{general_formulation} presents some results towards the study of models with both bias and confounder terms.
%In Section~\ref{general_formulation} we present some first results for when both bias and confounder terms are present in the model. 

\section{Model oriented selection of sub-dataset}
\label{section_moda}

%Design of Experiments model-based methods have been recently developed following the idea that a well-chosen subset of the big dataset can deliver equivalent answers compared to the full dataset at considerably less effort.  

The most general form of the considered model  is that of a linear model for a response variable $\pmb{Y}$ 
\begin{equation}
\label{model}
\pmb{Y}(\pmb{x},\pmb{z}) = \pmb{f}^\prime(\pmb{x}) \pmb{\theta} + \pmb{h}^\prime(\pmb{x}) \pmb{\psi} + \pmb{g}^\prime (\pmb{z}) \pmb{\phi} + \epsilon
\end{equation}
with $\pmb{\theta} \in \mathbb{R}^p$, $\pmb{\psi} \in \mathbb{R}^m$ and $\pmb{\phi} \in \mathbb{R}^q$ and 
with $\pmb{x} \in \mathcal{X}$, $\pmb{z} \in \mathcal{Z}$. 
The observed values are on the $\pmb{x}$, while the $\pmb{z}$ are unknown. 
Both the $ \mathcal{X}$ and $ \mathcal{Z}$ spaces are assumed to be finite and $^\prime$ indicates transpose.
The usual assumptions are taken on the random errors: $\epsilon_i$ are iid and $\operatorname{Var}(\epsilon_i) = \sigma^2$. 
There are three terms in the model: the first corresponds to a classical linear  model, the second to a bias term related to the variables $\pmb{x}$ and the last term models a bias that may result from confounders, sources of bias which either we do not know at all or are too costly to control. We assume it to be linear for simplicity of comparison. 

Special cases of the Model in Equation~(\ref{model}) have been addressed in order to adapt ideas from classical model based optimal DoE: \cite{montepiedra} and \cite{wiens} consider $\pmb{f}^\prime(\pmb{x}) \pmb{\theta} + \pmb{h}^\prime(\pmb{x}) \pmb{\psi} $,
while \cite{pescecladag} considers
$\pmb{f}^\prime(\pmb{x}) \pmb{\theta} + \pmb{g}^\prime (\pmb{z}) \pmb{\phi} $.
All search for a design which minimises the mean square error of the least square estimate (LSE) of the $\pmb{\theta}$ parameters, guarding against the two different sources of bias. 
Recently, authors of~\cite{drovandi} and~\cite{stufken} proposed methods of data selection from large datasets in a DoE context, as a response to the more and more frequent need to analyse Big Data. However they do not guard against different sources of bias. We review these first.

\subsection{Model without bias} 
\label{section_modelwithoutbias} 

In this section we consider the model $  \operatorname{E}\left[ Y(\pmb{x}) \right] =\pmb{f}( \pmb{x})^\prime \, \pmb{\theta}  $ and the two algorithms presented in~\cite{drovandi} and~\cite{stufken}. 
An optimal experimental design perspective is suggested in~\cite{drovandi}, where a retrospective sample set is drawn in accordance with a sampling plan or experimental design. Analysis and inference are then based on this designed sample. This approach is targeted towards applications of regression models with large number of observations and relative small number of predictors, otherwise the problem of finding the best subset of data becomes computationally hard or infeasible due to the curse of dimensionality. 

The pseudocode of the algorithm is presented in Algorithm~\ref{Drovandi}. 
The input to the algorithm is the support vector of the mean of a linear model,
$  \pmb{f}   $, 
a utility function $U$ based on $ \pmb{f}$, a distance function in $\mathcal X$  
and  a tall dataset called  {\bf Data}
 with typical row $(\pmb{x} , \pmb{y}(\pmb{x}))$ with  $\pmb{x} \in  {\bf Data} \subset \mathcal X$. 
 In~\cite{drovandi} various $U$ functions are considered and the Euclidean distance.   
The output of the algorithm is a subset of 
 $n_d$ data points from {\bf Data} which maximises some expected utility $U$,  where $n_d$ is much smaller than the number of points in $\mathcal X$. 
 
The key idea behind the algorithm is to ``cluster'' $\mathcal X$, or to ``discretise'' it, into a grid.
Then $ \pmb{\theta}   $ is estimated using an initial random sample from $\mathcal X$. 
Next  the grid point $\pmb{d^\ast}$ maximising $U$ is found and 
one or more $\pmb{x}$ points in \textbf{Data} which are closest to  $\pmb{d^\ast}$ with respect to specified distance are added to the random sample. 
This is repeated until a subset of size $n_d$  is obtained. 
 
\begin{algorithm}
\caption{Sample selection from  {\bf Data} based on $  \operatorname{E}\left[ Y(\pmb{x}) \right] {=}\pmb{f}( \pmb{x})^\prime \, \pmb{\theta}   $ according to~\cite{drovandi} }\label{Drovandi}
\begin{algorithmic}[1] 
\State Fix a grid on $\mathcal X$
\State Sample randomly a subset of size $n_t < n_d$ from the  {\bf Data} and obtain the estimate of ${\pmb{\theta}}$ or form a prior density function $p(\pmb{\theta})$. Set the current sample size $n_c = n_t$
\While  {$n_c \le n_d$ or when a certain criteria is not met}
\State Find the grid point $\pmb{d^\ast}$ such that $\pmb{d^\ast} {=} 
  \underset{\pmb{d} }{\operatorname{argmax} } \, 
 \mathbb{E}[U(\pmb{d}, \pmb{\theta}, {\pmb{y}}(\pmb{d}))]$
\State Find $\pmb{x}$ in \textbf{Data} and not already sampled, which minimizes the distance $ || \pmb{x} - \pmb{d}^\ast || $ \State Add $( \pmb{x}, \pmb{y}(\pmb{x}) )$ into the data subset, remove the observation  $( \pmb{x}, \pmb{y} (\pmb{x}) )$ from \textbf{Data} and set $n_c \leftarrow n_c + 1$ 
(steps 5 and 6 may be performed multiple times to sub-sample a batch of data of size $m$, and setting $n_c \leftarrow n_c + m$)
\State Re-estimate ${\pmb{\theta}}$ or update the prior distribution $p(\pmb{\theta})$
\State \textbf{go to while}
\EndWhile
\end{algorithmic}
\end{algorithm}
 
The major features of Algorithm~\ref{Drovandi} is that it returns a subset of {\bf Data} via an optimal, sequential and response adaptive procedure. 
The computations of the distances in point 5. and the optimisation problem in point 4.  
can be parallelised, thus speeding it up considerably. 
Parallelization is particularly useful when the stopping criterion, the utility function and/or the distance function are costly to evaluate or
when the sampling grid is large. 
Its major drawback is that it requires full trust in the model.
Also although it  can be adapted for variable selection, the algorithm  is efficient only for tall datasets, indeed point 5. and the computation of $U$ may suffer from  the curse of dimensionality.
Finally we note that 
the obtained optimal design  can be used as train set of, e.g., a random forest, giving interesting results (see Example~\ref{mortgage}). 

%%%%%%%%%
%%%%%%%%%
%%%%%%%%%
%%%%%%%%%
%%%%%%%%%

The second algorithm we present appears in~\cite{stufken} and 
is called IBOSS (Information-Based Optimal Subdata Selection).  
It is a deterministic algorithm to select the most informative data points 
for 
 the model $  \operatorname{E}\left[ Y(\pmb{x}) \right] =\pmb{f}( \pmb{x})^\prime \, \pmb{\theta}   $.
A pseudocode is given in Algorithm~\ref{Stufken}. 
The rational behind IBOSS is that $D$-optimal designs tend to be on the boundary of the available space.
 The selected points are shown to be optimal in the following sense.

 Let {\bf Data} have $N$ points. {\bf Data} can coincide with $\mathcal X$. 
A subset of  {\bf Data} of size $n_d$ is sought which maximises a univariate optimality criterion function $\Psi$ of the information matrix 
\[
\pmb{M}(\pmb{\delta}) = \frac{1}{\sigma^2} \sum_{i = 1}^{N} \delta_i \pmb{x}_i \pmb{x}_i' \, \, \text{ subject to  } \sum_{i = 1}^{N} \delta_i = n_d
\]
where 
$\delta_i = 1$  if point $i \in $ \textbf{Data} is selected and $\delta_i = 0$ otherwise.
%\[
%\delta_i = \begin{cases}
%1 & \text{if point $i \in $ \textbf{Data} is selected}\\
%0 & \text{otherwise}
%\end{cases}
%\]
The function $\Psi$ could be the determinant of $\pmb{M}$, expressing thus $D$-optimality.
In~\cite{stufken} the following inequality is proven for the $D$-optimality criterion
\[
\operatorname{det} (\pmb{M}(\pmb{\delta}) ) \le 4 \left( \frac{n_d}{4 \sigma^2} \right)^{p+1} \prod_{j=1}^{p} (x_{(N)j} - x_{(1)j})^2
\]
where $x_{(N)j} - x_{(1)j}$ is the observed range of the $j$th variable and $\sigma^2$ is the model variance. 
This gives a function easy to optimize and larger than the desired utility function. 
Thus the optimal design is obtained by selecting iteratively $r { = } n_d/(2p)$ data points on the boundary of the observed range of each predictor. 
If $\frac{n_d}{2p}$ is not integer, one can clearly take floor or ceiling or choose a suitable $n_d$.
 Note that  the full sample $\mathcal{X}$ does not need to be specified nor it is used.
 But the representativeness of {\bf Data} for $Y$ in $\mathcal X$ has to be trusted.

We found the algorithm to work better for tall datasets and tested it for up to one million points in four variables (much larger datasets are considered in~\cite{stufken}). 
It proved to be cost effective and can be parallelised.  
It requires full trust in the model and the output depends on the initial ordering of the variables in Step 3,  
as shown in the small two dimensional example in Figure~\ref{toy} where the designs obtained starting with the variable $x_1$ (in red) or $x_2$ (in green) can be very different. 
For special cases a symmetry argument or a group action could be employed to establish the equivalence of the obtained design. 

\begin{algorithm}
\caption{Pseudocode for IBOSS~\cite{stufken} %Sample selection from {\bf Data} based on $  \operatorname{E}\left[ Y(\pmb{x}) \right] {=} \pmb{f}( \pmb{x})^\prime \, \pmb{\theta}   $ according to~\cite{stufken}
}
\label{Stufken}
\begin{algorithmic}[1]
\State Assume $r = n_d/(2p)$ integer %Use a partition-based selection algorithm to perform each step
\State Initialise selected sample $\pmb{\delta}^\ast = \emptyset$
\For  {$j = 1, \ldots, p$}
\For {$1\le i \le N$}
\If {$j > 2$}
\State $\textbf{Data} \longleftarrow \textbf{Data} \setminus \pmb{\delta}^\ast$
\EndIf 
\State Add the $r$ points with the smallest $x_{ij}$ value to $\pmb{\delta}^\ast$
\State Add the $r$ points with the largest $x_{ij}$ values to $\pmb{\delta}^\ast$
\EndFor
\State\textbf{end}
\EndFor 
\State\textbf{end}
%\State $\hat{\pmb{\beta}}^D = \left\{ (\pmb{X}_D^\star)' \pmb{X}_D^\star \right\}^{-1}(\pmb{X}_D^\star)' \pmb{y}_D^\star $ and the estimated covariance matrix $\hat{\sigma}^2_D \left\{ (\pmb{X}_D^\star)' \pmb{X}_D^\star \right\}^{-1}$, where $\pmb{X}_D^\star = (\pmb{1}, \pmb{Z}_D^\star)$, $\pmb{Z}_D^\star$ is the covariance matrix of the subdata selected in the previous steps, $\pmb{y}_D^\star$ is the response vector of the subdata and $\hat{\sigma}^2_D = \left|\left| \pmb{y}_D^\star - \pmb{X}_D^\star \hat{\pmb{\beta}}^D\right|\right|^2 / (k-p-1)$
\State Obtain an estimate of  $\pmb{\theta}$ with the selected $n_d$ data points
\end{algorithmic}
\end{algorithm}

\begin{figure} 
\includegraphics[width=0.5\textwidth]{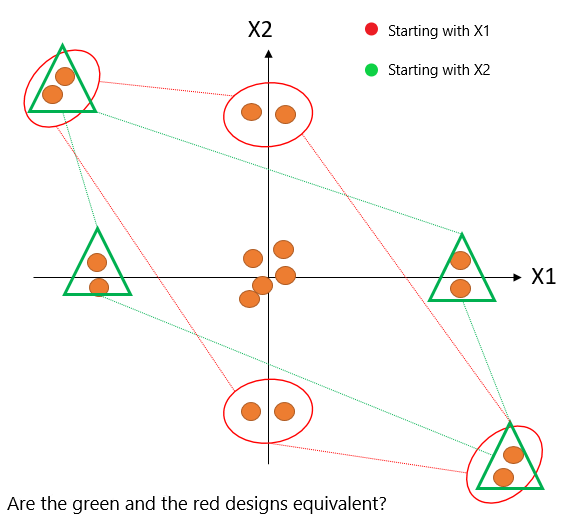}
\caption{Effect of the initial ordering of the factors on the IBOSS output}
\label{toy}
\end{figure} 

%\vspace{0.2cm}
%
%\begin{itemize}
%\item \underline{Special cases}:\vspace{0.01cm}
%\begin{enumerate}
%\item $\pmb{Y}(\pmb{x}) = \pmb{f}(\pmb{x})^\prime \pmb{\theta} + \pmb{\epsilon}$ \hfill [1,5]
%\item $\pmb{Y}(\pmb{x}) = \pmb{f}(\pmb{x})^\prime \pmb{\theta} + \pmb{h}(\pmb{x})^\prime \pmb{\psi} + \pmb{\epsilon}$  \hfill [3,6]
%\item $\pmb{Y}(\pmb{x, z}) = \pmb{f}(\pmb{x})^\prime \pmb{\theta} + \pmb{g}(\pmb{z})^\prime \pmb{\phi} + \pmb{\epsilon}$ \hfill [4]
%\end{enumerate}
%\end{itemize}
%

\subsection{Models with no confounders terms} 
\label{section_modelwithtbias} 

The model of the form 
$\pmb{Y}(\pmb{x}) = \pmb{f}(\pmb{x})^\prime \pmb{\theta} + \pmb{h}(\pmb{x})^\prime \pmb{\psi} + \pmb{\epsilon}$ 
is considered in~\cite{wiens}. 
The sample space  $\mathcal X$ is assumed to be a discrete finite set
$\mathcal X= \left \{
\pmb{x_1},   \ldots, \pmb{x_N} 
 \right \}$
  and $\mathcal X$ can be thought of  as the grid discretising the sample space in~\cite{drovandi}.

Several methods are presented in~\cite{wiens}  for the construction of designs that are minimax robust for linear or nonlinear models whose mean structures cannot be guaranteed to have been specified with complete accuracy. (Actually the author considers a more general bias term than the one above, specifically the model 
 $  \operatorname{E}\left[ Y(\pmb{x}) \right] =\pmb{f}( \pmb{x})^\prime \, \pmb{\theta}      +\psi(\pmb{x})$ under the constraint
$ \sum_{\pmb{x}\in \mathcal X} \pmb{f}(\pmb{x}) \psi(\pmb{x}) = 0 $ which ensures identifiability.)  
The classical notions of $I$- and $D$-optimality are extended by taking into account the bias of the predictions and a minimax $I$- and $D$-robust design theory is developed. 
Imposing a neighbourhood structure on the regression response function, the proposed methods maximise the mean squared error over this neighbourhood, and then seek $I$- and $D$- robust designs that minimize this maximum loss. 
% He also proposes a sequential method of design construction, in which each successive design point is chosen so as to minimize the subsequent loss.

Formally, 
let  $\hat{\pmb{\theta}}$ be the LSE of $\pmb{\theta}$ based on a design $\pmb{\xi}$.
The two  loss functions
\[
\mathcal{I}(\psi, \pmb{\xi}) {=} \sum_{\pmb{x} \in \chi} \mathbb{E} \left[ \left( \pmb{f}'(\pmb{x})\hat{\pmb{\theta}} - \mathbb{E}[Y(\pmb{x})]  \right)^2 \right] 
\quad \text{and} \quad
\mathcal{D}(\psi,\pmb{\xi}) {=} \left( \operatorname{det} \mathbb{E} \left[ (\hat{\pmb{\theta}} - \pmb{\theta} ) (\hat{\pmb{\theta}} - \pmb{\theta} )' \right]  \right)^{1/p}
\]
%is developed in~\cite{wiens} 
%where  $\hat{\pmb{\theta}}$ is the LSE of $\pmb{\theta}$ based on a design $\pmb{\xi}$. %  (subset of $\mathcal X$).
%
%The loss functions 
can be factorised as 
\[
\max_{\psi} \mathcal{I}(\psi,\pmb{\xi}) {=} \frac{\sigma^2+\tau^2}{n} \times \mathcal{I}_\nu (\pmb{\xi})
\quad \text{and} \quad
\max_{\psi} \mathcal{D}(\psi,\pmb{\xi}) {=} \frac{\sigma^2}{n} \left( \frac{\sigma^2+\tau^2}{\sigma^2 \operatorname{det}[\pmb{F}'\pmb{F}]} \right)^{1/p} \times \mathcal{D}_\nu(\pmb{\xi})
\]
where $\mathcal{I}_\nu$ and $\mathcal{D}_\nu$ depend only on the sought design and on known quantities.  
Here $\pmb{F} =  \left[ \pmb{f}(\pmb{x})\right]_{\pmb{x}\in \mathcal X}$ is the full model matrix. 
The objective is  to find $\underset{\pmb{\xi}}{\operatorname{min} }  \,\,  \underset{\psi}{ \operatorname{max} }\,\,  \mathcal{D}(\psi,\pmb{\xi}) $
 or $\underset{\pmb{\xi}}{\operatorname{min} }  \,\,  \underset{\psi}{ \operatorname{max} }\,\,  \mathcal{I}(\psi,\pmb{\xi}) $.

In more details 
$\sigma^2$ is the model variance,  $\tau$ a control parameter and $n$ the number of points in the design $\pmb{\xi}$ with non zero probability mass.
The design measure is indicated with $ \pmb{\xi} = \left\{
\begin{array}{l} \pmb{x} \\ \pmb{\xi}_{\pmb{x}} 
\end{array}\right\}_{  \pmb{x} \in \mathcal X}
 $   and can be collected in a diagonal matrix $\pmb{D}(\pmb{\xi})=\operatorname{diag}\left( \pmb{\xi}_{\pmb{x}} : \pmb{x} \in \mathcal X  \right)$.
The key factors in $\mathcal{I}(\psi,\pmb{\xi}) $ and $ \mathcal{D}(\psi,\pmb{\xi}) $ are
\begin{align*}
\mathcal{I}_\nu(\pmb{\xi}) &= (1-\nu) \operatorname{ tr } \pmb{R}^{-1}(\pmb{\xi}) + \nu \lambda_{max} \left(  \pmb{U}(\pmb{\xi}) \right)\\
\mathcal{D}_\nu(\pmb{\xi}) &= \left(  \frac{1-\nu+\nu \lambda_{max} \left(
  \pmb{R}^{1/2}(\pmb{\xi})[\pmb{U}(\pmb{\xi})-\pmb{I}_p]R^{1/2}(\pmb{\xi})
  \right) 
  }
  {\det[\pmb{R}(\pmb{\xi})]}   \right)^{1/p}
\end{align*}
where  
 $\nu = \tau^2/(\sigma^2 + \tau^2)$, $\pmb{I}_p$ the $p\times p$ identity matrix, $\lambda_{max} $ is the maximum eigenvalue.
 The control parameter $\nu$ is in $ [0,1]$. For $\nu=0$, then 
  $\mathcal{D}_0(\pmb{\xi}) $ gives the classical D-optimality and $\mathcal{I}_0(\pmb{\xi}) $ the classical I-optimality. 
For given $\nu \in (0, 1]$ a design $\pmb{\xi} $ on $\chi$ is defined to be \textbf{I-robust} if it minimizes $\mathcal{I}_\nu(\pmb{\xi})$ in the class of all designs on $\chi$, and \textbf{D-robust} if it minimizes $\mathcal{D}_\nu(\pmb{\xi})$.
If $\nu=1$ then the uniform design is D-/I- robust. 

The algorithm proposed in~\cite{wiens} is based on the QR-decomposition of $\pmb{F}$, 
where $\pmb{Q}$ is the $Q$-matrix in such decomposition, 
and finally 
\[
\pmb{R}(\pmb{\xi}) = \pmb{Q}' \pmb{D}(\pmb{\xi}) \pmb{Q} \quad \text{ and } \quad
\pmb{U}(\pmb{\xi}) = \pmb{R}^{-1}(\pmb{\xi})   \pmb{Q}' \pmb{D}^2(\pmb{\xi}) \pmb{Q}    \pmb{R}^{-1}(\pmb{\xi})
\]

\begin{algorithm}
\caption{Wiens approach for D-robustness and $\nu\neq 0,1$}\label{Wiens}
\begin{algorithmic}[1]
\State Let the sample space be $\mathcal X=\{\pmb{ x_i} \}_{i=1,\ldots,N}$ and $\pmb{e}_i$ the $i$-th column of the $I_{N}$ identity matrix, QR-decomposition of $\pmb{F}$
\State Sample randomly a subset of size $n_t < n_d$ from the sample space
($n_t=1$ is ok). 
Set the current sample size $n= n_t$ and $\pmb{\xi}=\left\{ \begin{matrix}
\pmb{x} \\ \pmb{\xi_{n,\pmb{x}}}
\end{matrix} 
\right\}_{\pmb{x} \in \mathcal{X}}$
\While  {$n \le n_d$ or when a certain criteria is not met} 
\State 
compute{  \small{
\begin{itemize} 
\item $\lambda$ maximum eigenvalue of $
  \pmb{R}^{1/2}(\pmb{\xi})[\pmb{U}(\pmb{\xi})-\pmb{I}_p]R^{1/2}(\pmb{\xi})
$

\item and  $\pmb{z}$  the corresponding eigenvector  

\item the vectors $
\pmb{v}(\pmb{\xi}) {=} \pmb{R}^{1/2}(\pmb{\xi})\pmb{z}(\pmb{\xi})$ and $
\pmb{w}(\pmb{\xi}) {=} \pmb{R}^{-1/2}(\pmb{\xi})\pmb{z}(\pmb{\xi})$
\item the matrices 
\begin{align*}
\pmb{J}(\pmb{\xi}) &= \lambda \left(\pmb{R}^{-1}(\pmb{\xi}) + \pmb{w}(\pmb{\xi})\pmb{w}'(\pmb{\xi}) \right) + \left( \pmb{w}(\pmb{\xi})\pmb{v}'(\pmb{\xi}) + \pmb{v}(\pmb{\xi})\pmb{w}'(\pmb{\xi}) \right)\\
\pmb{K}(\pmb{\xi}) &= 2 \pmb{w}(\pmb{\xi})\pmb{w}'(\pmb{\xi})
\end{align*}
\item and 
\[ T(\pmb{\xi}) = (1-\nu)\pmb{Q} \pmb{R}^{-1}(\pmb{\xi}) \pmb{Q}' + \nu [\pmb{Q} \pmb{J}(\pmb{\xi}) \pmb{Q}' - \pmb{D}(\pmb{\xi})\pmb{Q} \pmb{K}(\pmb{\xi}) \pmb{Q}'],
\]
\item the largest diagonal element of $T(\pmb{\xi}) $ and assume it is in entry $(i,i)$ 
\end{itemize}   
}}

\State Update the weights of  $\pmb{\xi_{n,\pmb{x}}}$ to
$
 \pmb{\xi_{n+1,\pmb{x}}} = \left(\frac{n}{n+1}\right) \left(\pmb{\xi}_n + \frac{1}{n}\pmb{e}_i\right) $

\State \textbf{go to while}.
\EndWhile
\end{algorithmic}
\end{algorithm}

The pseudocode is given in Algorithm~\ref{Wiens}.
Its main features are that bias is accounted for and the optimal design is known prior observing. 
The method is supported by strong theoretical background. 
As given the algorithm is purely sequential, but it can be tweaked to become adaptive. 
Unfortunately it requires the QR-decomposition of a high dimensional matrix and requires to keep in memory large matrices. 
Current available implementation is not very performing but a smart implementation may overcome some of these issues and make the algorithm efficient for significatively large sample sizes.

%%%%%%%%%
%%%%%%%%%
%%%%%%%%%
%%%%%%%%%
%%%%%%%%%

Next we recall some precursory work on optimal subdata collection for linear regression  based on the information matrix~\cite{montepiedra}
 which, we believe, is useful in the presence of big data.
 For $\xi_{ \pmb{x}}$ as above, %a design measure on a (discrete) $\mathcal{X}$, e.g. $\mathcal{X} = \textbf{Data}$, 
the information matrix can be written as
\[ M =  \int_{\mathcal X}    \left( \begin{matrix} \pmb{f} \\ \pmb{h} \end{matrix} \right) \left( \pmb{f}^\prime, \pmb{h}^\prime \right)
\, d\,\xi_{\pmb{x}} 
 =
\left [ \begin{array}{cc}
M_{11} & M_{12} \\
M_{21} & M_{22} \end{array} \right] 
\]
where $M_{11}$ depends only on $\pmb{f} $ and $M_{22}$ depends only on $\pmb{g} $.
The mean square error of the LSE of $\theta$ is $\mathbb{E} \{ (\hat{\pmb{\theta}} - \pmb{\theta})(\hat{\pmb{\theta}} - \pmb{\theta})^\prime \} {=} \sigma^2 N^{-1} R$
where $$ R = M_{11}^{-1} {+} \left( \frac{N}{\sigma} \right)^2  M_{11}^{-1} M_{12} \pmb{\psi} \pmb{\psi}^\prime M_{21}M_{11}^{-1} $$
 and the loss function for $D$-optimality becomes 
 \[ \operatorname{det} (R) = \operatorname{det} (M_{11}^{-1}) \,
\left (1+ \left( \frac{N}{\sigma} \right)^2   \pmb{\psi}^\prime M_{21} M_{11}^{-1} M_{12} \pmb{\psi} \right) \]
A $D$-optimal design for bias reduction satisfies the following optimisation problems
%\begin{description}
%\item[(bias reduction):] $ ~ ~
\[
 \xi^\ast = \underset{\xi}{\operatorname{argmax}} \operatorname{det}(M_{11}) \text{ such that }\left( \frac{N}{\sigma} \right)^2 \pmb{\psi}^\prime M_{21} M_{11}^{-1} M_{12} \pmb{\psi} \le B
 \]
 for a given $B$. 
(The authors in~\cite{montepiedra} also study variance reduction but here we focus on bias reduction as far more relevant in the analysis of big data.) 
%\item[(variance reduction):] $ ~ ~ \xi^\star = \underset{\xi}{\operatorname{argmin}} \left( \frac{N}{\sigma} \right)^2 \pmb{\psi}^\prime M_{21} M_{11}^{-1} M_{12} \pmb{\psi} $ such that $ \operatorname{det}(M_{11}) \ge D$
Thus the objective becomes to determine 
$\xi^\ast = \operatorname{argmax}_{\xi} \operatorname{det}(M_{11})$ such that $\left( \frac{N}{\sigma} \right)^2   \pmb{\psi}^\prime M_{21} M_{11}^{-1} M_{12} \pmb{\psi} \le B$.
% In~\cite{montepiedra} it is shown that a
A design $\xi^\ast$ is optimal 
 if and only if there exists $\lambda^\ast \ge 0$ such that
\[
d_1(\pmb{x}, \xi^\ast) + \lambda^\ast d_2(\pmb{x}, \xi^\ast) \le  p - \lambda^\ast B
\]
for all $\pmb{x} \in \mathcal{X}$, where
\begin{align*}
d_1(\pmb{x}, \xi) &= \pmb{f}(\pmb{x})' M_{11}^{-1} \pmb{f}(\pmb{x}) \, \, \text{   and    } \, \, d_2(\pmb{x}, \xi) = \phi^2(\pmb{x}, \xi) - 2 \phi(\pmb{x}, \xi) r(\pmb{x})
\end{align*}
with
\begin{align*}
r(\pmb{x}) &= \frac{N}{\sigma} \pmb{\psi}'  \pmb{h}(\pmb{x}) \, \, \text{ and } \, \, \phi(\pmb{x},\xi) = \int_{\mathcal{X}} \pmb{f}(\pmb{x})' M_{11}^{-1} \pmb{f}(\pmb{x}') \xi(d\pmb{x}')
\end{align*}

This gives a strong theoretical background and makes a good link with the next sections but does not provide specific algorithms nor applies to big data directly.

\subsection{Models with no bias terms}
\label{section_onlyconfounders}

Models of the form 
$\pmb{Y}(\pmb{x, z}) = \pmb{f}(\pmb{x})^\prime \pmb{\theta} + \pmb{g}(\pmb{z})^\prime \pmb{\phi} + \pmb{\epsilon}$
have been considered in~\cite{pescecladag}.
%The two main points of~\cite{pescecladag}  
%consider graphical models and bias models; the former is relevant in order to study conditional independence structures among problem variables, the latter is needed to ascertain and guard against hidden sources of bias. Using this approach, it is possible to demonstrate the existence of a Nash equilibrium in an optimal design setting, and this can be extended to the case of randomization, since it can be considered as a mixed strategy in game theoretic approach.  % Here we do not develop this further.
%
 Let $\xi_{ \pmb{x,z}}$ be a design measure on $\mathcal{X} \times \mathcal{Z}$. The information matrix is 
 \[
 M =  \mathlarger{\int}_{\mathcal{X} \times \mathcal{Z}} \left( \begin{matrix} \pmb{f} \\ \pmb{g} \end{matrix} \right) \left( \pmb{f}^\prime, \pmb{g}^\prime \right)
\, d\,\xi_{\pmb{x,z}} \]
The mean square error of the LSE of $\theta$ is $ \sigma^2 N^{-1}  R$
where 
\[ R = M_{11}^{-1} {+} \left( \frac{N}{\sigma} \right)^2  M_{11}^{-1} M_{12} \pmb{\phi} \pmb{\phi}^\prime M_{21}M_{11}^{-1}
\]
and the loss function for $D$-optimality depends on both $\pmb{x}$ and $\pmb{z}$ and it is
\[ \operatorname{det} (R) = \operatorname{det} (M_{11}^{-1}) \,
\left (1+ \left( \frac{N}{\sigma} \right)^2   \pmb{\phi}^\prime M_{21} M_{11}^{-1} M_{12} \pmb{\phi} \right) 
\] 

 We assume $\pmb{g}(\pmb{z})^\prime \pmb{\phi}$ unknown, belonging to some function class. For each $\pmb{x} \in \textbf{Data}$ there is an unobserved $\pmb{z_x} \in \mathcal{Z}$. Let $\pmb{G} =  [\pmb{g}(\pmb{z_x})]_{\pmb{x} \in \textbf{Data}}$ and $P_Z$ be a randomization distribution for the $\pmb{z_x}$'s. 
In the game theoretical approach in~\cite{pescecladag}, a $D$-optimal design measure is one maximising
\[
\underset{P_Z}{\min} ~ \mathbb{E}_{P_Z} \left\{ \max_{\substack{\text{ \small{function}}\\ \text{\small{class}}}} \, \pmb{G}^\prime \pmb{\phi} \pmb{F} M_{11}^{-2} \pmb{F}^\prime \pmb{\phi}^\prime \pmb{G} \right\}
\]
% So far this is only theoretical.

%%%%%%%%%
%%%%%%%%%
%%%%%%%%%
%%%%%%%%%
%%%%%%%%%

\section{General formulation: model with bias and confounders}
\label{general_formulation}

In this section we consider the more general form for the response variable $\pmb{Y}$ in Model~(\ref{model}), 
 define  the variance function to be $d((\pmb{x},\pmb{z}), \pmb{\xi}) = \pmb{f}^\prime(\pmb{x}) M(\pmb{\xi})^{-1} \pmb{f}(\pmb{x})$ %; this is called the variance function. 
and assume that $M(\pmb{\xi})$ is a closed and bounded subset of the semi-definite positive matrices. 
Then a version of the General Equivalence Theorem for Model~(\ref{model}) holds.

\begin{theorem}
\label{get_theorem}
For a design measure $\pmb{\xi}^\ast$ the following statements are equivalent 
\begin{enumerate}[label=(\roman*)]
%\emph{(i)}
\item $\pmb{\xi}^\ast$ maximizes $\operatorname{det}(M(\pmb{\xi}))$
%\emph{(ii)} 
\item $\pmb{\xi}^\ast$ achieves $\min_{\pmb{\xi}} \max_{(x,z) \in \mathcal{X} \times \mathcal{Z}} d((\pmb{x},\pmb{z}), \pmb{\xi})$
% \emph{(iii)} 
\item $\max_{(x,z) \in \mathcal{X} \times \mathcal{Z}} d((\pmb{x},\pmb{z}), \pmb{\xi}^\ast) = p+m+q-2$.
\end{enumerate}
\end{theorem}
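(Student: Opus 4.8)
The plan is to derive the result as an instance of the classical Kiefer--Wolfowitz General Equivalence Theorem, once the three regressor blocks are stacked into a single design vector. Write $\pmb{F}(\pmb{x},\pmb{z}) = \big(\pmb{f}(\pmb{x})^\prime,\ \pmb{h}(\pmb{x})^\prime,\ \pmb{g}(\pmb{z})^\prime\big)^\prime$, so that $M(\pmb{\xi}) = \int_{\mathcal{X}\times\mathcal{Z}} \pmb{F}\pmb{F}^\prime \, d\pmb{\xi}$ is linear in $\pmb{\xi}$ and the variance function is $d\big((\pmb{x},\pmb{z}),\pmb{\xi}\big) = \pmb{F}^\prime(\pmb{x},\pmb{z})\, M(\pmb{\xi})^{-1}\, \pmb{F}(\pmb{x},\pmb{z})$. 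The first ingredient is that $\pmb{\xi}\mapsto \log\det M(\pmb{\xi})$ is concave on the convex set of design measures and, thanks to the standing assumption that $\{M(\pmb{\xi})\}$ is a closed and bounded subset of the non-negative-definite cone, it is differentiable at any $\pmb{\xi}^\ast$ with invertible (on the identifiable subspace) information matrix. For a concave differentiable functional on a convex set, $\pmb{\xi}^\ast$ is a global maximiser if and only if the one-sided directional derivative towards every competing design is non-positive, and by linearity of $M$ in $\pmb{\xi}$ it suffices to test directions given by point masses.

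The key computation is this directional derivative. Taking $\bar{\pmb{\xi}}$ the point mass at $(\pmb{x},\pmb{z})$ and $\pmb{\xi}_\alpha = (1-\alpha)\pmb{\xi}^\ast + \alpha\bar{\pmb{\xi}}$, the identity $\tfrac{d}{d\alpha}\log\det A(\alpha) = \operatorname{tr}\big(A(\alpha)^{-1}A^\prime(\alpha)\big)$ gives
\[
\left.\frac{\partial}{\partial\alpha}\log\det M(\pmb{\xi}_\alpha)\right|_{\alpha=0^+}
= \operatorname{tr}\big( M(\pmb{\xi}^\ast)^{-1}\pmb{F}\pmb{F}^\prime \big) - \operatorname{tr}(I)
= d\big((\pmb{x},\pmb{z}),\pmb{\xi}^\ast\big) - c ,
\]
where $c=\operatorname{tr}(I)$ is the number of identifiable parameters, equal to $p+m+q-2$ once the identifiability restrictions on the bias term $\pmb{h}^\prime\pmb{\psi}$ and the confounder term $\pmb{g}^\prime\pmb{\phi}$ (in the spirit of $\sum_{\pmb{x}}\pmb{f}(\pmb{x})\psi(\pmb{x})=0$ from the no-confounder case) are imposed. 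Hence statement (i) is equivalent to $d\big((\pmb{x},\pmb{z}),\pmb{\xi}^\ast\big)\le c$ for all $(\pmb{x},\pmb{z})$, i.e. to $\max_{(\pmb{x},\pmb{z})\in\mathcal{X}\times\mathcal{Z}} d\big((\pmb{x},\pmb{z}),\pmb{\xi}^\ast\big)\le c$.

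To close the cycle, integrate $d(\cdot,\pmb{\xi}^\ast)$ against $\pmb{\xi}^\ast$ itself: $\int d(\cdot,\pmb{\xi}^\ast)\, d\pmb{\xi}^\ast = \operatorname{tr}\big(M(\pmb{\xi}^\ast)^{-1}M(\pmb{\xi}^\ast)\big) = c$. Consequently, for \emph{any} design, $\max d(\cdot,\pmb{\xi}) \ge \int d(\cdot,\pmb{\xi})\, d\pmb{\xi} = c$, so $c$ is a universal lower bound for the maximal variance. Now: (i) $\Rightarrow$ the bound above $\Rightarrow$ $\max d(\cdot,\pmb{\xi}^\ast)=c$, which is (iii) and also shows $\pmb{\xi}^\ast$ attains the minimax value $c$ in (ii); conversely (iii) forces the directional derivative to be $\le 0$ everywhere, hence (i); and (ii), via the lower bound, means the minimax value is $c$ and is realised exactly when $\max d(\cdot,\pmb{\xi}^\ast)=c$, giving (iii). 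This establishes (i)$\Leftrightarrow$(iii)$\Leftrightarrow$(ii).

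The routine parts are the trace/determinant calculus and the convex-analytic characterisation of a maximiser. I expect the genuine obstacles to be twofold: first, justifying differentiability of $\log\det M$ and the one-sided derivative formula at $\pmb{\xi}^\ast$ — this is where the closedness and boundedness hypothesis on $\{M(\pmb{\xi})\}$ is used, together with restricting attention to the face of designs for which $\pmb{F}$ lies in the column space of $M$; and second, the careful bookkeeping of the identifiability constraints on $\pmb{\psi}$ and $\pmb{\phi}$ so that the constant coming out of the derivative is exactly $p+m+q-2$ rather than the naive $p+m+q$. Everything else follows the standard template.
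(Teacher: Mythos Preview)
Your proposal is correct and shares the paper's backbone: both compute the directional derivative of $\log\det M$ along convex mixtures to identify (i) with the pointwise bound $d\le p+m+q-2$, and both integrate $d(\cdot,\pmb{\xi})$ against $\pmb{\xi}$ itself to get the universal lower bound $\max d\ge p+m+q-2$. The paper records these as two auxiliary conditions (iv) and (v) and argues circularly. The one substantive difference is the implication (iii)$\Rightarrow$(i): you invoke concavity of $\log\det M$ so that non-positivity of all directional derivatives immediately gives global optimality, whereas the paper avoids convex analysis and instead applies the arithmetic--geometric mean inequality $\operatorname{tr}(A)\ge k\,(\det A)^{1/k}$ to $A=M^{-1}(\pmb{\xi}^\ast)M(\pmb{\xi}')$, obtaining $\det M(\pmb{\xi}')\le\det M(\pmb{\xi}^\ast)$ directly. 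Your route is cleaner and makes (i)$\Leftrightarrow$(v) a single equivalence; the paper's route is the classical Kiefer--Wolfowitz device and sidesteps any discussion of differentiability at the boundary. Two further remarks: your sketch actually handles item (ii) via the lower bound, while the paper's proof never returns to (ii) after stating it; and your caution about the constant $p+m+q-2$ is warranted, since the paper simply writes $I_{p+m+q-2}$ without ever justifying the ``$-2$'', so your reading in terms of two identifiability constraints is the only plausible explanation on the table.
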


%The idea behind the proof of Theorem~\ref{get_theorem} derives from the fact that the Model~(\ref{model}) can be seen as the sum of three linear models (the three parts that compose it) and supposing that $\pmb{\xi}_1^\ast$, $\pmb{\xi}_2^\ast$ and $\pmb{\xi}_3^\ast$ are D-optimal, respectively, for $\pmb{f}^\prime(\pmb{x}) \pmb{\theta}$, $\pmb{h}^\prime(\pmb{x}) \pmb{\psi}$ and $\pmb{g}^\prime (\pmb{z}) \pmb{\phi}$, then the product design $\pmb{\xi}_1^\ast \otimes \pmb{\xi}_2^\ast \otimes \pmb{\xi}_3^\ast$ is D-optimal for the model sum in Equation~(\ref{model}).

\begin{proof} 
There are two further equivalent conditions which allows a circular proof.
One is a local D-optimality condition
\[
\emph{(iv)} 
\frac{\partial}{\partial(\pmb{x},\pmb{z})} \log \operatorname{det} \left(  M \left( \left( 1- \alpha \right) \pmb{\xi}^\ast + \alpha  \pmb{\xi}^\prime  \right) \right)\mid_{\alpha = 0} \, \le \, \pmb{0}
\]
where $\pmb{\xi}^\ast $ and $ \pmb{\xi}^\prime $ are design measures. 
The fifth equivalent condition is  
\[
\emph{(v)} \, \, d((\pmb{x},\pmb{z}), \pmb{\xi}) \le p+m+q-2 \, \, \, \text{ for all } (\pmb{x},\pmb{z}) \in \mathcal{X} \times \mathcal{Z}
\]

That Item~\emph{(i)} implies \emph{(iv)} is straigthforward.
To show that Item~\emph{(iv)} implies \emph{(v)} we use the matrix identity 
$
\frac{\partial}{\partial \alpha} \log \operatorname{det}(A) = \operatorname{tr} \left( A^{-1} \frac{\partial A}{\partial \alpha} \right).
$. 
Thus
\begin{align*}
\frac{\partial}{\partial(\pmb{x},\pmb{z})} \log & \operatorname{det} \left(  M \left( \left( 1- \alpha \right) \pmb{\xi}^\ast + \alpha  \pmb{\xi}^\prime  \right) \right)\mid_{\alpha = 0}\\
&= \operatorname{tr} \left( M^{-1} \left( \left( 1- \alpha \right) \pmb{\xi}^\ast + \alpha  \pmb{\xi}^\prime \right) \cdot \frac{\partial}{\partial \alpha} M \left( \left( 1- \alpha \right) \pmb{\xi}^\ast + \alpha  \pmb{\xi}^\prime \right) \right)\mid_{\alpha = 0}\\
&= \operatorname{tr} \left( M^{-1} \left( \left( 1- \alpha \right) \pmb{\xi}^\ast + \alpha  \pmb{\xi}^\prime \right) \cdot \frac{\partial}{\partial \alpha}  \left( (1 - \alpha) M(\pmb{\xi}^\ast) + \alpha  M(\pmb{\xi}^\prime) \right) \right)\mid_{\alpha = 0}\\
&= \operatorname{tr} \left( M^{-1} \left( \left( 1- \alpha \right) \pmb{\xi}^\ast + \alpha  \pmb{\xi}^\prime \right) \cdot   \left( - M(\pmb{\xi}^\ast) +  M(\pmb{\xi}^\prime) \right) \right)\mid_{\alpha = 0}\\
&= \operatorname{tr}\left( M^{-1}(\pmb{\xi}^\ast) \cdot (-M(\pmb{\xi}^\ast) + M(\pmb{\xi}^\prime)) \right)\
= \operatorname{tr}\left( - I_{p+m+q-2} +   M^{-1}(\pmb{\xi}^\ast) M(\pmb{\xi}^\prime) \right)\\
&= -(p+m+q-2) + \operatorname{tr} \left(  \int_{\mathcal{X} \times \mathcal{Z}} d((\pmb{x},\pmb{z}), \pmb{\xi}^\ast) \, \pmb{\xi}^\prime(dx , dz) \right)
\end{align*}
so that the statement in \emph{(iv)} is equivalent to $ \int_{\mathcal{X} \times \mathcal{Z}} d((\pmb{x},\pmb{z}), \pmb{\xi}^\ast) \, \pmb{\xi}^\prime(dx , dz) \le p+m+q-2$ for all $ \pmb{\xi}^\prime$. This holds in particular when $\pmb{\xi}^\prime$ places mass one at a specific point $(\pmb{x},\pmb{z})$. 
But this is $d((\pmb{x},\pmb{z}), \pmb{\xi}^\ast) \le p+m+q-2 $ for all $ (\pmb{x},\pmb{z})$, so \emph{(v)} is verified.

To prove that 
 \emph{(iii)} is equivalent to \emph{(iv)}, we can show that 
 \[ \max_{(x,z) \in \mathcal{X} \times \mathcal{Z}} d((\pmb{x},\pmb{z}), \pmb{\xi}^\ast) \ge p+m+q-2 \text{ for all } (\pmb{x},\pmb{z}) 
 \]
  But this follows from  the fact that a maximum is always greater than or equal to an average, so
\begin{align*}
\max_{(x,z) \in \mathcal{X} \times \mathcal{Z}} d((\pmb{x},\pmb{z}), \pmb{\xi}^\ast) &\ge \int_{\mathcal{X} \times \mathcal{Z}} d((\pmb{x},\pmb{z}), \pmb{\xi}^\ast) \, \pmb{\xi}^\ast(dx , dz)\\
&= \operatorname{tr} \left( M^{-1}(\pmb{\xi}^\ast) M(\pmb{\xi}^\ast) \right) = \operatorname{tr}(I_{p+m+q-2}) = p+m+q-2 
\end{align*}
%To be a little more precise we need to exhibit a $\pmb{\xi}^\ast$ actually achieving the bound, and this is possible  because we assumed that $M(\pmb{\xi})$ is a closed and bounded subset of the semi-definite positive matrices.
As we assumed that $M(\pmb{\xi})$ is a closed and bounded subset of the semi-definite positive matrices, $\pmb{\xi}^\ast$  achieves the bound

% The remaining part is to prove \emph{(iii)} $\iff$ \emph{(i)}. Suppose \emph{(iii)} holds. 
Lastly we prove that \emph{(iii)}  implies \emph{(i)}.
We use the identity $\operatorname{tr}(A) \ge n \cdot \operatorname{det}(A)^{\frac{1}{n}}$ for an $n \times n$ matrix $A$. 
Thus for $k = p+m+q-2$ we have
\begin{align*}
k &= \max_{(x,z) \in \mathcal{X} \times \mathcal{Z}} d((\pmb{x},\pmb{z}), \pmb{\xi}^\ast) \ge \int_{\mathcal{X} \times \mathcal{Z}} d((\pmb{x},\pmb{z}), \pmb{\xi}^\ast) \, \pmb{\xi}^\prime(dx , dz)\\
&= \operatorname{tr} \left( M^{-1}(\pmb{\xi}^\ast) M(\pmb{\xi}^\prime) \right) \ge k \cdot \operatorname{det} \left( M^{-1}(\pmb{\xi}^\ast) M(\pmb{\xi}^\prime) \right)^{\frac{1}{k}}\\
&= k \cdot  \left( \operatorname{det} \left( M^{-1}(\pmb{\xi}^\ast) \right) \operatorname{det} \left( M(\pmb{\xi}^\prime) \right) \right)^{\frac{1}{k}} = k \cdot \left( \frac{\operatorname{det}(M(\pmb{\xi}^\prime))}{\operatorname{det}(M(\pmb{\xi}^\ast))}  \right)^{\frac{1}{k}}
\end{align*}
From this $\operatorname{det}(M(\pmb{\xi}^\ast)) \ge \operatorname{det}(M(\pmb{\xi}^\prime))$, which is \emph{(i)}; so we have \emph{(iii)} holds if and only if \emph{(i)} holds. \qed
\end{proof}

\subsection{Guard against bias}
\label{bias}

In analogy to Subsection~\ref{section_onlyconfounders},
we want to protect the usual LSE of $\pmb{\theta}$ in Model~(\ref{model}) against the two bias terms $\pmb{\psi}$ and $\pmb{\phi}$. 
The information matrix for a $\xi_{\pmb{x},\pmb{z}}$ design measure on $\mathcal{X} \times \mathcal{Z}$ can be written as 
\begin{equation*}
%\label{matrix}
M = M(\pmb{\xi}) = \mathlarger{\int}_{\mathcal{X} \times \mathcal{Z}} \left( \begin{matrix} \pmb{f} \\ \pmb{h} \\ \pmb{g} \end{matrix} \right) \left( \pmb{f}^\prime, \pmb{h}^\prime, \pmb{g}^\prime \right)\, d\,\xi_{\pmb{x,z}} =
\left [ \begin{array}{ccc}
M_{11} & M_{12} & M_{13}\\
M_{21} & M_{22} & M_{23}\\
M_{31} & M_{32} & M_{33} \end{array} \right] 
\end{equation*}
where by symmetry $M_{12}$ is the transpose of $M_{21}$ and so on. 
The mean square error matrix of the LSE of $\pmb{\theta}$ is $ \sigma^2 N^{-1} R$ where $N$ is the sample size, $\sigma^2$ the common variance of the error terms for the Model~(\ref{model}) and where 
\[  
R = M_{11}^{-1} {+} \left( \frac{N}{\sigma} \right)^2  M_{11}^{-1} \left[ M_{12} ~ M_{13}\right] \left[\pmb{\psi} ~ \pmb{\phi} \right]^\prime \left[\pmb{\psi}^\prime ~ \pmb{\phi}^\prime \right] 
\left[ \begin{array}{c} M_{21} \\ M_{31} \end{array} \right]^\prime 
M_{11}^{-1}
\]

Above we gave an elementary proof of a General Equivalence Theorem in order to get a relation between optimality criteria (D-, G- and A-optimality).
% that ask to minimize over the choice of experimental design a loss function of the information matrix $M(\pmb{\xi})$. 
In this subsection we are interested in minimising loss functions of the matrix $R$. Future work will focus on making a relation between the loss functions of $M(\pmb{\xi})$ and $R$, in order to use the General Equivalence Theorem also for the matrix $R$. 
In particular, here we concentrate on the A-optimality and derive a formula for $\operatorname{tr}(R)$ 

\begin{align*} 
\operatorname{tr}(R) =& \operatorname{tr}\left( M_{11}^{-1} {+} \left( \frac{N}{\sigma} \right)^2  M_{11}^{-1} \left[ M_{12} ~ M_{13}\right] \left[\pmb{\psi} ~ \pmb{\phi} \right]^\prime \left[\pmb{\psi}^\prime ~ \pmb{\phi}^\prime \right] \left[ M_{21} ~ M_{31}\right]^\prime M_{11}^{-1} \right)\\
=& \operatorname{tr}\left(M_{11}^{-1}\right) + \left( \frac{N}{\sigma} \right)^2  \operatorname{tr} \left(  M_{11}^{-1} \right. \\
& \left. \left[ M_{12} \pmb{\psi} \pmb{\psi}^\prime M_{21} + M_{13} \pmb{\phi} \pmb{\phi}^\prime M_{31} + M_{12} \pmb{\psi} \pmb{\phi}^\prime M_{31} + \left(M_{12} \pmb{\psi} \pmb{\phi}^\prime M_{31} \right)^\prime \right] M_{11}^{-1} \right)\\ 
%=& \operatorname{tr}\left(M_{11}^{-1}\right) + \left( \frac{N}{\sigma} \right)^2 \operatorname{tr}\left(M_{11}^{-1} M_{12} \pmb{\psi} \pmb{\psi}^\prime M_{21} M_{11}^{-1} \right) + \left( \frac{N}{\sigma} \right)^2 \operatorname{tr}\left(M_{11}^{-1} M_{13} \pmb{\phi} \pmb{\phi}^\prime M_{31} M_{11}^{-1} \right)\\
%&+ \left( \frac{N}{\sigma} \right)^2 \operatorname{tr}\left(M_{11}^{-1} M_{12} \pmb{\psi} \pmb{\phi}^\prime M_{31} M_{11}^{-1} \right) + \left( \frac{N}{\sigma} \right)^2 \operatorname{tr}\left(M_{11}^{-1} \left( M_{12} \pmb{\psi} \pmb{\phi}^\prime M_{31} \right)^\prime M_{11}^{-1} \right)\\
%=& \operatorname{tr}\left(M_{11}^{-1}\right) + \left( \frac{N}{\sigma} \right)^2 \operatorname{tr}\left(M_{11}^{-1} M_{12} \pmb{\psi} \pmb{\psi}^\prime M_{21} M_{11}^{-1} \right) + \left( \frac{N}{\sigma} \right)^2 \operatorname{tr}\left(M_{11}^{-1} M_{13} \pmb{\phi} \pmb{\phi}^\prime M_{31} M_{11}^{-1} \right)\\
%&+ \left( \frac{N}{\sigma} \right)^2 \operatorname{tr}\left(M_{11}^{-1} M_{12} \pmb{\psi} \pmb{\phi}^\prime M_{31} M_{11}^{-1} \right) + \left( \frac{N}{\sigma} \right)^2 \operatorname{tr}\left(M_{11}^{-1} M_{12} \pmb{\psi} \pmb{\phi}^\prime M_{31} M_{11}^{-1} \right)\\
=& \operatorname{tr}\left(M_{11}^{-1}\right) + \left( \frac{N}{\sigma} \right)^2 \operatorname{tr}\left(M_{11}^{-1} M_{12} \pmb{\psi} \pmb{\psi}^\prime M_{21} M_{11}^{-1} \right) + \left( \frac{N}{\sigma} \right)^2 \operatorname{tr}\left(M_{11}^{-1} M_{13} \pmb{\phi} \pmb{\phi}^\prime M_{31} M_{11}^{-1} \right)\\
&+ 2 \left( \frac{N}{\sigma} \right)^2 \operatorname{tr}\left(M_{11}^{-1} M_{12} \pmb{\psi} \pmb{\phi}^\prime M_{31} M_{11}^{-1} \right)\\
=& \operatorname{tr}(M_{11}^{-1}) + \left( \frac{N}{\sigma} \right)^2 \operatorname{tr}(S_2) + \left( \frac{N}{\sigma} \right)^2 \operatorname{tr}(S_3) + 2 \left( \frac{N}{\sigma} \right)^2 \operatorname{tr}(S_4)
\end{align*}
where
\begin{align*}
\operatorname{tr}(S_2) &= \operatorname{tr}\left(M_{11}^{-1} M_{12} \pmb{\psi} \pmb{\psi}^\prime M_{21} M_{11}^{-1} \right) = \operatorname{tr}\left( \pmb{\psi}^\prime M_{21} M_{11}^{-1} M_{11}^{-1} M_{12} \pmb{\psi}  \right)\\ &= \pmb{\psi}^\prime M_{21} M_{11}^{-2} M_{12} \pmb{\psi}\\
\operatorname{tr}(S_3) &= \pmb{\phi}^\prime M_{31} M_{11}^{-2} M_{13} \pmb{\phi}\\
\operatorname{tr}(S_4) &= \pmb{\phi}^\prime M_{31} M_{11}^{-2} M_{12} \pmb{\psi}
\end{align*}
and thus
\begin{align}
\label{traceR}
\operatorname{tr}(R) &=\operatorname{tr} \left( M_{11}^{-1}\right)  \\
& + \left( \frac{N}{\sigma} \right)^2 \left( \pmb{\psi}^\prime M_{21} M_{11}^{-2} M_{12} \pmb{\psi} 
+  \pmb{\phi}^\prime M_{31} M_{11}^{-2} M_{13} \pmb{\phi} + \pmb{\phi}^\prime M_{31} M_{11}^{-2} M_{12} \pmb{\psi} \right)  \nonumber
\end{align}

In $\operatorname{tr}(R)$, the first and second terms depend  only on $\pmb{x}$ and te third term on $\pmb{x}$ and on $\pmb{z}$ but not on the bias $\pmb{h}$.
When the foruth term is equal to zero, then the minimization of the $\operatorname{tr}(R)$ can be done
separately on the $\pmb{x}$ variables and the $\pmb{z}$ variables and generalization to non linear confounders is easier. 
%This would also facilitate the analysis of models where the confounder term $ \pmb{g}(\pmb{z})^\prime \pmb{\phi} $ is not the linear function of unknown parameters but a generic function of the $\pmb{z}$ variables.

%%%%%%%%%
%%%%%%%%%
%%%%%%%%%
%%%%%%%%%
%%%%%%%%%

\section{Examples and simulations}

\begin{example} \rm{ 
%\subsection{Case study: mortgage dataset}
\label{mortgage}

% Algorithms~\ref{Drovandi} and~\ref{Stufken} are compared on the simulated mortgage defaults (year 2000) dataset analysed in~\cite{drovandi}.
Algorithm~\ref{Drovandi} is applied on the simulated mortgage defaults (year 2000) dataset analysed in~\cite{drovandi}.
The dataset has $1,000,000$ data points, a binary response for the mortgage default $Y_i \sim \text{\emph{Binary}}(\pi_i)$ and four covariates: 
credit Score ($\pmb{x}_1$), 
age of the house in years ($\pmb{x}_2$),
number of years the mortgage holder has been employed at current job ($\pmb{x}_3$)
and amount of credit card debt ($\pmb{x}_4$).
The scaled values of the data points are clustered around the grid in Table~\ref{grid}, so we take this as the grid used in Algorithm~\ref{Drovandi}. 
The response is skewed: $Y_i=1$ in 1031 units and $Y_i=0$ for  $998,969$ units and following~\cite{drovandi} 
we assume  $Y_i \sim \text{\emph{Binary}}(\pi_i)$ and a logistic model  $\text{\emph{logit}}(\pi_i) = \theta_0 + \theta_1 x_{1i} + \theta_2 x_{2i} + \theta_3 x_{3i} + \theta_4 x_{4i}$.

\begin{table}[h]
%\centering
\begin{tabular}{l|c}
\hline
\textbf{Covariate} & \textbf{Grid}\\
\hline
creditscore & -4, -3, -2, -1, 0, 1, 2, 3, 4\\
houseAge & -2, -1, 0, 1, 2\\
yearsemploy & -2, -1, 0, 1, 2, 3, 4\\
ccDebt & -2, -1, 0, 1, 2, 3, 4\\
\hline
\end{tabular}
\caption{Grid generated by data points.}
\label{grid}
\end{table}

The maximum likelihood estimates of the parameters of the logit models are obtained % using Algorithm~\ref{Drovandi}
starting with $n_t =5,000$ points in step 2. and with a final sample of size $n_d = 6,200$. 
The estimates are consistent with those in~\cite{drovandi}. 

Further to~\cite{drovandi}
in Figure~\ref{train} we investigate the effect of the choice of the initial sample on the parameter estimates:
 \textbf{black} refers to an initial sample including all units for which $Y=1$ (a dope training set in machine learning), 
\textbf{red}  to a randomly selected initial sample,
\textbf{green} to a stratified sample: we considered the distribution of ccDebt for the sub-population for which $Y=1$ and sampled one data points for each quantile, since preliminary analysis indicates that ccDebt  effect most the response. 
All the estimates converge to the same values, but the black being quicker as expected. % the black line is the one that converges quicker. Hence if we choose $n_d$ appropriately, the choice of the initial sample is not relevant. 

%We use Algorithm~\ref{Drovandi} in order to obtain a MLE of the parameters: we choose to start with $n_t = 5,000$ initial sample and we want a final sample of $n_d = 6,200$ data points. First we want to investigate if different choices for the selection of the initial sample affect the final estimates, especially since we have a skewed response. The plots in Figure~\ref{train} refers to the estimated values of parameters at each iterations of the algorithm. The colors represent different choices of the initial sample: \textbf{black} refers to a sample with all the ``1'' contained in it (what in machine learning we may call a dope training set), \textcolor{red}{\textbf{red}} corresponds to a completely random selection of the initial sample, while for the \textcolor{green}{\textbf{green}} lines we considered the distribution of ccDebt for the sub-population of the ``1'' and sample one data points for each quantile, since this covariate seems to be the one that effects more the response after some preliminary descriptive analysis. We can observe that all the estimates converges to the same values, but as we expected the black line is the one that converges quicker. Hence if we choose $n_d$ appropriately, the choice of the initial sample is not relevant. 

\begin{figure}[h]
\includegraphics[width=\textwidth]{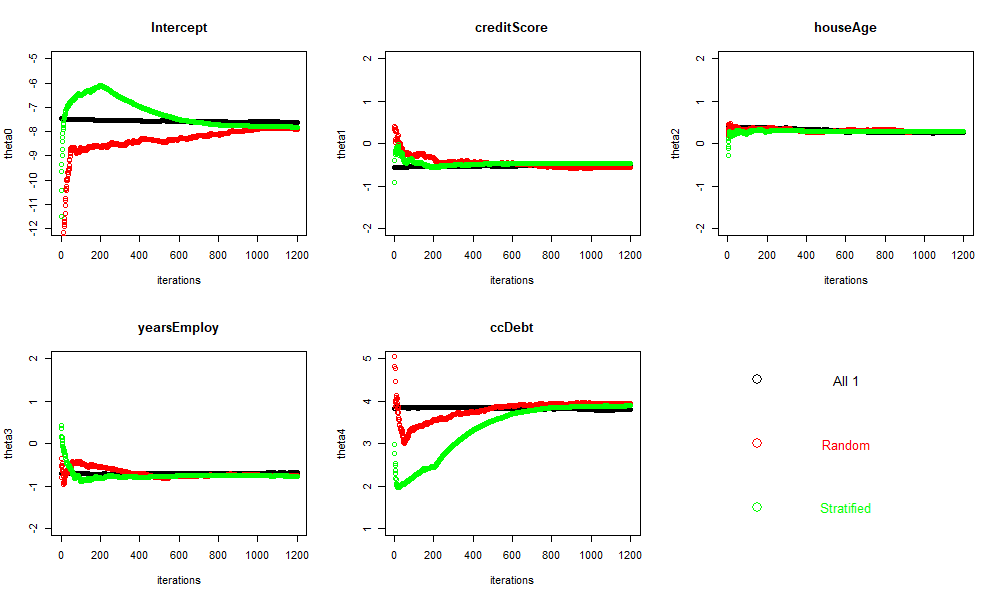}
\caption{Effect of the initial sample on the output of Algorithm~\ref{Drovandi}}
\label{train}
\end{figure} 

The performance of Algorithm~\ref{Drovandi}  in terms of the prediction of the response outcomes is tested on $10,010$ data points that are not considered above. 
The comparison is made with random forests (RF) and neural networks (NN) build with a random training set or with the final ``best'' sample obtained through Algorithm~\ref{Drovandi}. The results are report in Table~\ref{confusion_matrix}.  
Algorithm~\ref{Drovandi} performs better than the other approaches with a random training set, but the performance of  a RF or a NN is better when starting with  
  the best sample from Algorithm~\ref{Drovandi}. 

%We are also interested on how Algorithm~\ref{Drovandi} performs in terms of the prediction of the response outcomes. We consider $10,010$ data points that are not considered in the analysis above as a test set. We compare the predictions of Algorithm~\ref{Drovandi} with respect to random forests (RF) build with a random training set or with the final ``best'' sample obtained through Algorithm~\ref{Drovandi}, the same for neural networks (NN). In Table~\ref{confusion_matrix} are reported the confusion matrices of each approach: Algorithm~\ref{Drovandi} performs very well with respect to the other approaches with a random train, but when we use the best sample we improve the performance of the random forest and neural network.

\begin{table}[h]
%\centering
\begin{tabular}{l|c}
\hline
\textbf{Model} & \textbf{Confusion matrix}\\
\hline
Algorithm 1 & $\begin{bmatrix}
 9765  &  3\\
 235  &  7\\
\end{bmatrix}$ \\
\hline
RF + random train & $\begin{bmatrix}
10000  &  10\\
0  &  0\\
\end{bmatrix}$\\
\hline
RF + best sample &  $\begin{bmatrix}
9718  &  2\\
282  &  8\\
\end{bmatrix}$\\
\hline
NN + random train &  $\begin{bmatrix}
10000  &  10\\
0  &  0\\
\end{bmatrix}$\\
   \hline
NN + best sample &  $\begin{bmatrix}
9176  &  0\\
824  &  10\\
\end{bmatrix}$\\
\hline
\end{tabular}
\caption{Confusion matrices of different methods.}
\label{confusion_matrix}
\end{table}

} \end{example} 

%%%%%%%%%
%%%%%%%%%
%%%%%%%%%
%%%%%%%%%
%%%%%%%%%

\begin{example} \label{DrSt}
\rm{ 

Algorithms~\ref{Drovandi} and~\ref{Stufken} are compared on simulated data  from the model
 $y=-x/2-5/3+0.35\, \sin( x^2) + z/9 + \epsilon $
 which includes both the bias term $ 0.35\, \sin( x^2) $ and the confounder term $g(z)=z/9$.  
Hundred and five points in $\mathbb R^2$ were generated from two independent gaussian random variables $X\sim \mathcal N(2,1)$ and $Z\sim \mathcal N(0,1)$ for the first and second component of the points, respectively.
% for $z$ hundred points were generated from an independent standard normal. 
The grid used in the algorithms is given by  $200$ uniformly distributed points for $x$, chosen between the minimum and the maximum  generated
values, and crossed with $200$ points for $z$ chosen in the same manner. 

Figure~\ref{aa} shows the twelve point optimal designs returned by the two algorithms (in green Algorithm~\ref{Drovandi} and in red Algorithm~\ref{Stufken})
when the $D$-optimality utility function is computed on $-x/2-5/3+ z/9 $. 
Algorithm~\ref{Stufken} pushes the selected points more on the boundary of the $(x,z)$-plane. 
The plot in the left panel of  Figure~\ref{bb}  projects the designs in Figure~\ref{aa} of the response-$x$ plane, 
the right plot compares on the same plane the ``optimal'' designs returned by the two algorithms when all biases are ignored and  the optimality function is thus computed on $-x/2-5/3$. 
As expected the outputs for the models with no confounders are very similar, begin different in just one point. 
Always the value of the utility function is larger for Algorithm~\ref{Stufken}.

\begin{figure}[h]  
\includegraphics[width=0.75\textwidth]{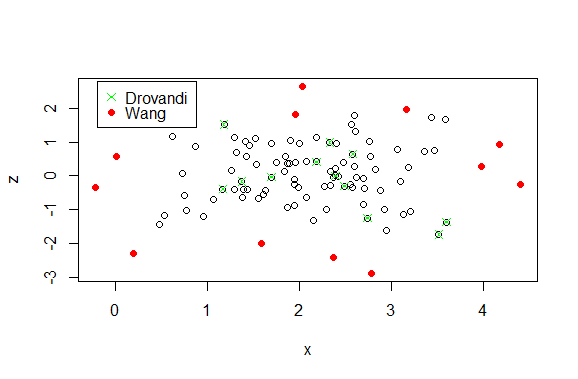}    % aa = drovandiWang_x_vs_z.png
\caption{Outputs from Algorithms~\ref{Drovandi} in red and~\ref{Stufken} in green for Example~\ref{DrSt}}
\label{aa}
\end{figure}

\begin{figure}[h]  
\includegraphics[width=0.5\textwidth]{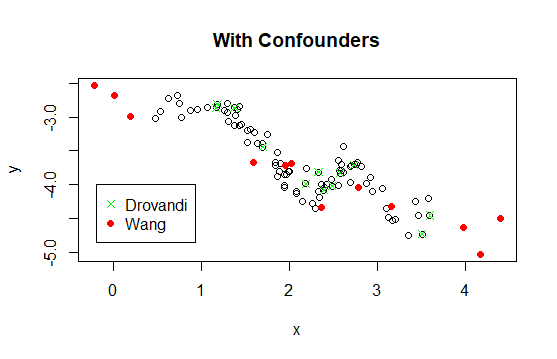}  % bb = drovandiStufken_bias_noBias    
\includegraphics[width=0.5\textwidth]{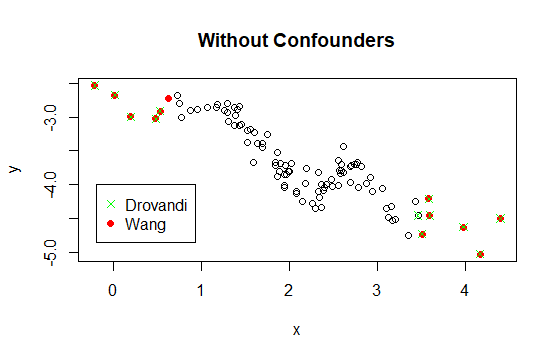}

\caption{Different utilities  for Example~\ref{DrSt}}
\label{bb}
\end{figure} 

} \end{example} 

%%%%%%%%%
%%%%%%%%%
%%%%%%%%%
%%%%%%%%%
%%%%%%%%% 

\begin{example} \label{DrStWiens}  \rm{ 
Next we consider $105$ integer in $-100{:}100$ and the model $x+\cos(x) + z/9+\epsilon$ with $\epsilon \sim \mathcal N(0,1)$.
The $D$-optimal designs returned by Algorithms~\ref{Drovandi},~\ref{Stufken} and~\ref{Wiens} are plotted on the $x{-}z$ plane in Figure~\ref{cc}.
In the right panel the utility function $\operatorname{det} X^tX$ is based only on $x$, that is does not include any information of the bias and the confounder terms.
To make a comparison, isn the left panel the utility function is based on $(x, z/9)$, that is the  term modelling confounders is used. 
The control parameter $\nu$ in Algorithm~\ref{Wiens} is set equal to  $0.5$.
The grid used in the Algorithms for the $z$ is of hundred points in $-3/3$ and also hundred points were taken for the $x$ grid  in $-100:100$. We tried different discretization for the grids and the results were the same. 

In all our trials when comparing the designs obtained from a model, say $x$, and from a model with confounders, say $x+z/9$,  Algorithms~\ref{Drovandi} and~\ref{Wiens} 
give results more similar. 

\begin{figure}[h]  
\includegraphics[width=0.5\textwidth]{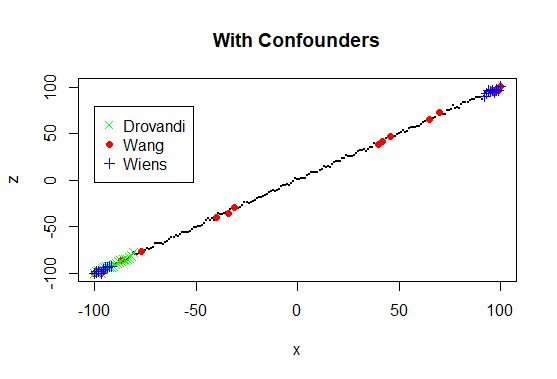} 
\includegraphics[width=0.5\textwidth]{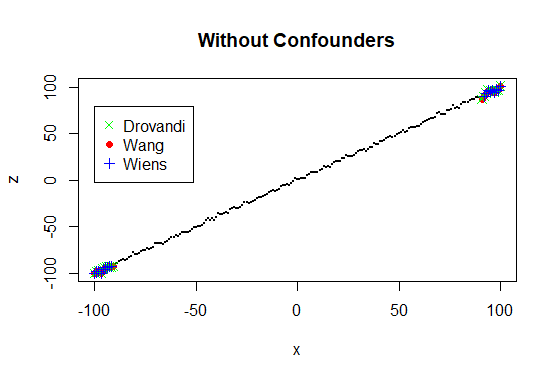} 
\caption{Different utilities for  Example~\ref{DrStWiens}}
\label{cc}
\end{figure}

} \end{example} 

%%%%%%%%%
%%%%%%%%%
%%%%%%%%%
%%%%%%%%%
%%%%%%%%% 

\section{Conclusions and future work}

In Section~\ref{section_moda} we reviewed literature which considers special cases of Model~(\ref{model}) in order to adapt ideas from classical model based optimal  DoE: \cite{montepiedra} and \cite{wiens} consider a linear model with a bias term while \cite{pescecladag} consider a linear model with confounders, searching for a design which minimises the mean square error of the LSE of the $\pmb{\theta}$ parameters. Here in Section~\ref{general_formulation} we follow that but guarding against the two different sources of bias. 

The algorithms in~\cite{drovandi} and \cite{stufken} offer methods of data selection from large datasets in a DoE context, however they do not guard against different sources of bias.
We are currently integrating the above ideas with those algorithms with the objective of providing efficient subsample selection methods for problems with known confounders and also with unknown confounders. 
%
%\subsection{General Equivalence Theorem}
%\label{get}
%We reviewed some results and algorithms for the selection of a subset from a large dataset 
%according to model-based principles in DoE and outlined their main features and the best conditions for their applicability. 
%For tall dataset the Algorithm in~\cite{drovandi} has proven to be efficient, but efficient algorithms for problems with known confounders and also with unknown confounders are still to be developed. 

We also presented some preliminary results on a unified theory to take into account selection bias, model bias and bias due to confounders 
in the choice of a subsample for an efficient estimation, in the least square sense, of parameters expressing the effect of interest. 
Still much work is needed to turn this into an algorithm for the selection of efficient subsamples from large or big data sets.
Furthermore the results in Section~\ref{general_formulation} need to be refined, possibly linking them with the algorithms in Section~\ref{section_moda}.
%
%More general model forms than the linear one could be analysed as well. Modularity in the model, possible symmetries in $\mathcal X$ and 
%could also be exploited to develop an efficient algorithm in the style of Algorithm~\ref{Wiens}. 
%A game theoretic approach to model based selection of a subsample from a large dataset is addressed in~\cite{pescecladag} and~\cite{waitewoods2018} but 
%it has not been reviewed here.

%\begin{acknowledgements}
%If you'd like to thank anyone, place your comments here
%and remove the percent signs.
%\end{acknowledgements}

% BibTeX users please use one of
%\bibliographystyle{spbasic}      % basic style, author-year citations
%\bibliographystyle{spmpsci}      % mathematics and physical sciences
%\bibliographystyle{spphys}       % APS-like style for physics
%\bibliography{}   % name your BibTeX data base

% Non-BibTeX users please use

\end{document}